\newtheorem{theorem}{Theorem}
\newtheorem{lemma}{Lemma}
\newtheorem{corollary}{Corollary}
\newtheorem{claim}{Claim}
\newtheorem{definition}{Definition}
\newenvironment{clmproof}[1][Proof]{\begin{proof}[#1]}{\end{proof}}
\renewcommand{\vec}[1]{\boldsymbol{#1}}
\newcommand{\F}{\mathbb{F}}
\newcommand{\N}{\mathbb{N}}
\title{\bf New Constructions of SD and MR Codes over Small Finite Fields}
\newcommand{\email}[1]{\href{mailto:#1}{\nolinkurl{#1}}}
\author{Guangda Hu\\
Princeton University\\
\email{guangdah@cs.princeton.edu}
\and
Sergey Yekhanin\\
Microsoft Research\\
\email{yekhanin@microsoft.com}
}
\date{}
\begin{document}

\maketitle

\begin{abstract}
Data storage applications require erasure-correcting codes with prescribed sets of dependencies between data symbols and redundant symbols. The most common arrangement is to have $k$ data symbols and $h$ redundant symbols (that each depends on all data symbols) be partitioned into a number of disjoint groups, where for each group one allocates an additional (local) redundant symbol storing the parity of all symbols in the group. A code as above is maximally recoverable, if it corrects all erasure patterns that are information theoretically correctable given the dependency constraints. A slightly weaker guarantee is provided by SD codes.

One key consideration in the design of MR and SD codes is the size of the finite field underlying the code as using small finite fields facilitates encoding and decoding operations. In this paper we present new explicit constructions of SD and MR codes over small finite fields.

\end{abstract}

\section{Introduction}

Consider a systematic linear $[n,k]$ code with codeword length $n=k+h+\frac{k+h}{r}$ for some integers $r$ and $h.$ Assume that the code has the following structure. There are $h$ redundant codeword coordinates (heavy symbols) that depend on all systematic coordinates (data symbols). Further, these $k+h$ coordinates are partitioned into $g=\frac{k+h}{r}$ sets of size $r,$ where for each set one allocates an additional (local) redundant symbol storing the parity of all symbols in the set. We refer to symbols in a set and their respective local parity as a local group.

Local codes with parameters $(k,r,h)$ as above have been recently studied~\cite{CHL,GHSY,BHH} and used in practice~\cite{HuangSX} in the context of erasure coding for data storage, where local parities facilitate fast recovery of any single symbol when it is erased, while heavy parities provide tolerance to a large number of simultaneous erasures.

A local code is Maximally Recoverable (MR) (equivalently, PMDS using the terminology from \cite{BHH}), if it corrects all erasure patterns which are information theoretically correctable given the prescribed  dependency relations between data symbols and parity symbols. This amounts to correcting every pattern of simultaneous erasures that can be obtained by erasing one symbol per local group and $h$ more arbitrary symbols. A somewhat weaker guarantee is provided by SD codes~\cite{Blaum,SD_fail}. Here one assumes that $r+1$ symbols in each of $g$ groups are ordered. A code is called SD if it corrects every pattern of simultaneous erasures that can be obtained by erasing the $i$-th symbol in every local group (for some arbitrary fixed $i\in [r+1]$) and $h$ more symbols.

In applications one is interested in explicit MR (or at least SD) codes defined over small finite fields, as the size of the field underlying the code determines computational efficiency of encoding and decoding and affects the throughput of the system. Constructing MR and SD codes over small finite fields has been a subject of a line of work.

Explicit families of MR local codes with $h=1$ and $h=2$ were obtained in~\cite{Blaum,BHH}. SD codes were introduced and studied in~\cite{Blaum,SD_fail}. Some non-explicit constructions of SD codes with $h=3,$ without analysis of the field size were given in~\cite{CSC}.

The first explicit families of MR local codes for all values of $k,r$ and $h$ were given in~\cite{GHJY14}. In the setting of $h=O(1),$ $r=O(1),$ and growing $k,$ these constructions yield field of size roughly $q=O\left(n^{h-1}\right).$ For $h=3,$ one gets a field of size $q=O\left(n^{1.5}\right).$ In contrast to this, when $h=2,$ $r=O(1),$ and $k$ grows, constructions of~\cite{BHH,GHJY14} yield codes over a field of optimal size $O(n).$ In the setting of $h=O(1),$ $g=O(1),$ and growing $k,$ constructions of~\cite{GHJY14} yield field of size $n^{(g+h)/2}.$

\subsection{Our results}
In this work we present two new explicit constructions of SD and MR codes over small finite fields. Our codes improve upon earlier results both in concrete settings and asymptotically. To keep the statements simple, we mainly focus on the asymptotic setting:
\begin{itemize}
\item We obtain a new family of $(k,r,h)$-SD codes with three heavy parities that uses a field of size $O(n)$ when $r=O(1)$ and $k$ grows. This shows that optimal linearly-growing field size is attainable not just for $h\leq 2$ but also for $h=3,$ at least in the SD model.

\item We give a new general construction of $(k,r,h)$-local MR codes. Our construction improves upon codes of~\cite{GHJY14} in the narrow setting of two local groups ($g=2$) when $h$ is a constant divisible by $4$ and $k$ grows. In this setting we get a field of size $n^{h/2}.$

Perhaps more importantly, unlike all previously known constructions that work for all $h,$ (with an exception of~\cite{TPD} that uses $n^{O(k)}$ field size) our code family is ``Vandermonde type'', rather than ``Linearized'', i.e., it uses consecutive exponents $1,2,3,\ldots,$ rather than $1,2,4,8,\ldots$ to define the parity check matrix. This is an important property as one can show that no ``Linearized'' construction can beat the $q=O\left(n^{h/2}\right)$ bound for the field size. New techniques are of vital interest.
\end{itemize}

\section{Preliminaries}
\begin{definition}[$(k,r,h)$-local codes, \cite{GHJY14}]
Let $C$ be a linear systematic $[n,k]$ code defined over some finite field. We say that $C$ is a {\em $(k,r,h)$-local code} if:
\begin{itemize}
\item $r\mid(k+h)$ and $n=k+h+(k+h)/r$;
\item There are $h$ {\em heavy parity} symbols, where each heavy parity is a linear combination of all $k$ data symbols.
\item The collection of $k+h$ symbols (data and heavy parities) is partitioned into $g=\frac{k+h}{r}$ sets of size $r,$ where for each set one allocates a (local) redundant symbol. We refer to symbols in a set and their respective local parity as a {\em local group}. Local parity ensures that the sum of all symbols in a local group is zero.
\end{itemize}
\end{definition}

In this paper, we require $g,r\geq 2.$ We use $\mathcal{C}$ to denote the set of all $(k,r,h)$-local codes of length $n$. Every code in $\mathcal{C}$ has a parity check matrix in the following form:
\begin{equation} \label{eq:parity}
H=\left[\begin{array}{ccc|c|ccc}
1 & \ldots & 1 & & & & \\
& & & \ldots\ldots\ldots & & & \\
& & & & 1 & \ldots & 1 \\
\hline
v_{1,1,1} & \ldots & v_{1,r+1,1} & \ldots\ldots\ldots & v_{g,1,1} & \ldots & v_{g,r+1,1} \\
v_{1,1,2} & \ldots & v_{1,r+1,2} & \ldots\ldots\ldots & v_{g,1,2} & \ldots & v_{g,r+1,2} \\
& \vdots & & \vdots & & \vdots & \\
v_{1,1,h} & \ldots & v_{1,r+1,h} & \ldots\ldots\ldots & v_{g,1,h} & \ldots & v_{g,r+1,h}
\end{array}\right].
\end{equation}
The top part contains $g$ rows that are linear constraints for the local parities in each group. The bottom part contains $h$ rows that are linear constraints corresponding to the heavy parities. There are $g$ groups of columns, that we call {\em wide columns}. Each wide column contains $r+1$ columns.

\begin{definition}[maximally recoverable codes, \cite{GHJY14}]
A code $C\in\mathcal{C}$ is {\em maximally recoverable} iff for any set $E\subseteq[n]$, where $E$ is obtained by picking one coordinate from each local group, if we puncture the code in coordinates specified by $E$ we obtain a maximal distance separable code that encodes a message of length $k$ to a codeword of length $k+h$.
\end{definition}

Our proofs rely on the following standard lemma:
\begin{lemma}[\cite{GHJY14}]\label{lem:pf}
A code $C\in\mathcal{C}$ defined by a parity check matrix $H$ (\ref{eq:parity}) is maximally recoverable, iff any set $T$ of $g+h$ columns of $H$ that is obtained by picking one column from each wide column and $h$ additional columns has full rank.
\end{lemma}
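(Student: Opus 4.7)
The plan is to translate the definition of maximally recoverable directly into rank conditions on $H$ via two separate computations: the dimension and the minimum distance of each puncturing $C_E$. Fix any set $E\subseteq[n]$ obtained by picking one coordinate from each local group (so $|E|=g$), let $S=[n]\setminus E$, and let $C_E$ denote the code $C$ punctured at $E$, which has length $|S|=k+h$. The definition of MR demands that $C_E$ be a $[k+h,k]$-MDS code for every such $E$, i.e., that $\dim C_E=k$ and that $C_E$ has minimum distance $h+1$.

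For the dimension, I would observe that the kernel of the restriction map $C\to C_E$ consists of codewords of $C$ supported on $E$, which as a vector space is exactly $\ker(H|_E)$, and has dimension $g-\text{rank}(H|_E)$. Hence $\dim C_E=k-g+\text{rank}(H|_E)$, and $\dim C_E=k$ iff the $g$ columns of $H$ indexed by $E$ are linearly independent.

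For the minimum distance, I would show that $C_E$ has a nonzero codeword of weight at most $h$ iff there exists $A\subseteq S$ with $|A|\le h$ and a codeword of $C$ supported on $A\cup E$ that is not supported on $E$ alone. Counting dimensions, this happens iff $\text{rank}(H|_{A\cup E})<g+\text{rank}(H|_E)$ for some such $A$. So $C_E$ is MDS with parameters $[k+h,k]$ iff $\text{rank}(H|_{A\cup E})=g+|A|$ for every $A\subseteq S$ with $|A|\le h$ (which forces $\text{rank}(H|_E)=g$ by taking $A=\emptyset$). A short monotonicity argument reduces the family of conditions to the single statement that every $T=A\cup E$ of size $g+h$ satisfies $\text{rank}(H|_T)=g+h$, i.e., $T$ has full column rank.

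Quantifying over all admissible choices of $E$ yields the claimed equivalence. I expect no serious obstacle: the argument is essentially bookkeeping in linear algebra. The only point requiring care is the correct identification of the projection kernel $C\cap\F^E$ with $\ker(H|_E)$ and the analogous identification used for the minimum-distance computation, which together make the rank conditions on submatrices of $H$ visible.
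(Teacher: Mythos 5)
The paper does not actually prove this lemma -- it is imported verbatim from~\cite{GHJY14} -- so there is no in-paper argument to compare against; judged on its own, your proof is the standard puncturing/rank duality argument and it is essentially correct and complete. The dimension computation ($\dim C_E = k - g + \mathrm{rank}(H|_E)$, hence $\dim C_E = k$ iff the columns indexed by $E$ are independent), the characterization of low-weight codewords of $C_E$ via codewords of $C$ supported on $A\cup E$ but not on $E$, and the final monotonicity step (any $A$ with $|A|<h$ extends to an $h$-subset of $S$ since $|S|=k+h\geq h$, and subsets of independent column sets are independent) all go through, and the quantification over admissible $E$ matches exactly the sets $T$ in the lemma.

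One slip to fix: in the minimum-distance step the dimension count gives that a codeword supported on $A\cup E$ but not on $E$ exists iff $\mathrm{rank}(H|_{A\cup E}) < |A| + \mathrm{rank}(H|_E)$, not $< g + \mathrm{rank}(H|_E)$ as you wrote; taken literally your inequality is inconsistent (take $A=\emptyset$). Since the very next sentence draws the correct conclusion $\mathrm{rank}(H|_{A\cup E}) = g + |A|$ for all $A\subseteq S$ with $|A|\leq h$, this is a typo rather than a genuine gap, but the bound $|A|+\mathrm{rank}(H|_E)$ is what the argument needs.
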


We are also interested in Sector-Disk (SD) codes \cite{Blaum}.
\begin{definition}[Sector-Disk code]
A code $C\in\mathcal{C}$ specified by a parity check matrix $H$ (\ref{eq:parity}) is an {\em SD code}, iff any set $T$ of $g+h$ columns of $H$ that is obtained by picking the $j$-th column from each wide column for some $j\in[r+1]$ and $h$ additional columns has full rank.
\end{definition}

\begin{definition}[$w$-independence]
Let $\F$ be a characteristic-2 finite field. We say a set $S\subseteq\F$ is $w$-independent if for all $T\subseteq S,$ $0<|T|\leq w,$ elements of $T$ do not sum to zero.
\end{definition}

Now we review a simple way of constructing MR codes, which was studied in~\cite{GHJY14} generalizing~\cite{BHH}. (Our Vandermonde-type construction in Section~\ref{Sec:VT} improves upon it in some regimes.)

\begin{theorem} \label{thm:easycons}
Suppose $r+1$ and $g$ are powers of $2$ (therefore $n$ is also a power of $2$). There is an explicit construction of $(k,r,h)$-local MR codes over a field of size $n^{(g+h)/2}$ when $g+h$ is even, or $2n^{(g+h-1)/2}$ when $g+h$ is odd.
\end{theorem}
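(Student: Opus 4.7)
The plan is to follow the ``Linearized'' (Moore-matrix) construction of \cite{GHJY14}: choose an exponent sequence $e_1<\cdots<e_h$ and elements $\alpha_{i,j}\in\F_{2^m}$, and set $v_{i,j,\ell}=\alpha_{i,j}^{2^{e_\ell}}$. By Lemma~\ref{lem:pf}, MR is equivalent to non-singularity of every $(g+h)\times(g+h)$ submatrix of $H$ obtained by choosing one column per wide column plus $h$ more. The first move is to pivot away the top $g$ local-parity rows: within each group designate the first chosen column as a pivot and column-reduce every other chosen column in that group against it. After this step, reordering columns so that pivots come first, the matrix has the block form $\bigl[\begin{smallmatrix}I_g&0\\ A&D\end{smallmatrix}\bigr]$, where $D$ is the $h\times h$ matrix of pairwise differences $D_{t,\ell}=v_{i_t,j_t,\ell}-v_{i_t,j'_{i_t},\ell}$ for the $h$ non-pivot chosen columns. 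Hence the full submatrix is non-singular iff $\det D\neq 0$.

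Next I would use that in characteristic $2$ Frobenius is $\F_2$-linear, so each entry simplifies to $D_{t,\ell}=(\alpha_{i_t,j_t}-\alpha_{i_t,j'_{i_t}})^{2^{e_\ell}}=\delta_t^{2^{e_\ell}}$. The matrix $D$ is then a Moore matrix in the $h$ within-group differences $\delta_t$, and by the classical Moore determinant formula it is non-singular iff $\delta_1,\ldots,\delta_h$ are $\F_2$-linearly independent. The construction has thus been reduced to placing the $n=g(r+1)$ values $\alpha_{i,j}$ in $\F_{2^m}$ so that, for every valid selection of $g+h$ columns, the corresponding $h$ within-group differences are $\F_2$-independent.

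To realize this with $m=(g+h)/2\cdot\log_2 n$ (field size $n^{(g+h)/2}$), I would exploit that $g$ and $r+1$ are powers of $2$: view $\F_{2^m}$ as an $\F_2$-vector space, split its coordinates into $g$ blocks of suitable dimension, and arrange the $\alpha_{i,j}$'s so that every within-group difference of group $i$ lies inside block $i$, giving cross-group $\F_2$-independence for free. The $r+1$ values inside each block are then chosen (e.g.\ via a carefully selected linearized-polynomial evaluation) so that every valid within-group sub-selection is itself $\F_2$-independent. Balancing the block dimensions against the $h$ additional selections and choosing the exponents $e_\ell$ appropriately produces the stated field-size bound, with the parity mismatch when $g+h$ is odd costing the extra factor of $2$ in $2n^{(g+h-1)/2}$.

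The main obstacle is this third step: engineering an explicit placement of the $\alpha_{i,j}$'s and exponents $e_\ell$ that simultaneously yields $\F_2$-independence across \emph{every} valid column selection while keeping $m$ as small as $(g+h)/2\cdot\log_2 n$. The first two steps (Gaussian pivoting and the Moore determinant formula) are standard formal reductions; matching the exponent $(g+h)/2$ in the field-size bound is the delicate combinatorial part and is where the powers-of-$2$ hypotheses on $g$ and $r+1$ are used crucially, to make the direct-sum arrangement of group-blocks fit exactly.
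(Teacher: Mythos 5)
Your first two steps (pivoting each additional column against the mandatory column of its group and invoking the Moore determinant, so that non-singularity reduces to $\F_2$-independence of the $h$ within-group differences $\delta_t$) are correct and are exactly the reasoning underlying the criterion the paper cites from \cite{GHJY14}. The genuine gap is in your third step, which you yourself flag as the main obstacle: all of the content of Theorem~\ref{thm:easycons} lives there, and the route you sketch would not reach the stated field size. The idea you are missing is that one does not need cross-group independence ``for free'' via a direct-sum arrangement; it suffices that the single set of all $n$ values $\{x_{i,j}\}$ be $(g+h)$-independent, i.e.\ no nonempty subset of size at most $g+h$ sums to zero. Indeed, any $\F_2$-dependence among the chosen differences expands, after cancelling repeated pivots in characteristic $2$, into a sum of distinct $x_{i,j}$'s with at most $s_i+1$ terms coming from a group that holds $s_i$ additional columns, hence at most $h+g$ terms in total. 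An explicit $(g+h)$-independent set of size $n$ is then obtained from BCH-type parity-check columns, $x_{i,j}=\beta\circ\beta^{3}\circ\cdots\circ\beta^{g+h-1}$ with $\beta$ ranging over $\F_n$ (prepending a constant $1$ coordinate when $g+h$ is odd), which lives in a field of exactly $n^{(g+h)/2}$, respectively $2n^{(g+h-1)/2}$, elements.

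By contrast, your block/direct-sum plan forces every within-group difference of group $i$ into its own coordinate block, so each block must by itself support $r+1$ points none of whose relevant subset sums (up to about $\min(r,h)+1$ terms, since up to $\min(r,h)$ additional columns may land in one group) vanish; this costs roughly $\lceil(\min(r,h)+1)/2\rceil\log_2(r+1)$ bits per block, i.e.\ about $g\,\lceil(\min(r,h)+1)/2\rceil\log_2(r+1)$ bits in total. Already for $g=2$ and large $h\leq r$ this is about $h\log_2(r+1)$ bits, a field of size roughly $n^{h}$, far above the claimed $n^{(h+2)/2}$. The saving in the exponent comes precisely from the fact that the $h$ extra columns are a \emph{shared} budget across all groups, which a per-group reservation of coordinates cannot exploit; a single globally independent set does exploit it. Note also that the powers-of-two hypotheses are used only to make $n$ (and hence the BCH field sizes above) powers of $2$, not to make a direct-sum decomposition fit.
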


\begin{proof}[(Proof sketch)]
In this setting the field size $n^{(g+h)/2}$ or $2n^{(g+h-1)/2}$ is a power of $2$. Let $v_{i,j,b}=x_{i,j}^{2^{b-1}}$ in~(\ref{eq:parity}), where $\{x_{i,j}\}$ are elements of the field, for all $i\in[g],j\in[r+1],b\in[h]$. By Proposition~10 in~\cite{GHJY14}, the code defined by the parity check matrix~(\ref{eq:parity}) is maximally recoverable if elements $\{x_{i,j}\}$ are ($g+h$)-independent. We construct these elements using BCH codes. If $g+h$ is even, we choose $\{x_{i,j}\}$ to have shape $\beta\circ\beta^3\cdots\circ\beta^{g+h-1}$; If $g+h$ is odd, we choose $\{x_{i,j}\}$  to have shape $1\circ\beta\circ\beta^3\cdots\circ\beta^{g+h-2}$, where $\beta$ runs through $\F_n$ and $\circ$ denotes concatenation of binary strings. One can verify the properties of the construction.
\end{proof}

\section{The construction of SD codes with $h=3$}

We now present our construction of SD codes with 3 heavy parities over a characteristic-2 field of size $O(r^3n).$ When $r$ is constant, the field size is linear of $n.$

By increasing each of $r$ and $n$ by at most a constant multiplicative factor we can have $(r+1)$ and $n$ be powers of two. Furthermore, by increasing $n$ by at most a multiplicative factor of $(r+1)$ we can have $\log_2 (r+1) \mid \log_2 n$. Consider the parity check matrix $H$ (\ref{eq:parity}), which has dimension $(g+3)\times n$. We set
\begin{equation} \label{eq:sdcons}
v_{i,j,1}=x_{i,j},\quad v_{i,j,2}=x_{i,j}^2,\quad v_{i,j,3}=x_{i,j}^4,
\end{equation}
where $i\in[g],j\in[r+1]$. To complete specifying $H$, we need to specify $\{x_{i,j}\},$ $i\in [g],$ $j\in [r+1]$ in some field $F_{2^t}.$ We set $t=2\log_2(r+1)+1+\log_2 n$. Let $S=\{s_1,\ldots,s_{r+1}\} \subseteq \mathbb{F}_{2(r+1)^2}$ be an ordered $5$-independent set. Such a set can easily be obtained from BCH codes. We can set the elements in $S$ to be $1\circ\beta_i\circ\beta_i^3$, where $\beta_i$ ($i\in[r+1]$) takes every element of $\F_{r+1}$, and $\circ$ denotes concatenation of binary strings. Consider the field $\mathbb{F}_n.$ Note that $\mathbb{F}_{r+1}=\{f_1,\ldots,f_{r+1}\}\subseteq \mathbb{F}_n.$ Let $\{\alpha_1,\ldots,\alpha_g\}\subseteq \mathbb{F}_n$ be such that for $i\ne j\in [g]:$
$$\alpha_i\cdot\mathbb{F}_{r+1}\cap\alpha_j\cdot\mathbb{F}_{r+1}=\{0\}.$$
The following formula specifies $\{x_{i,j}\}, i\in [g], j\in [r+1]$ in the field $F_{2^t}$ via their representation as bit strings:
\begin{equation}\label{Eqn:Xij}
x_{i,j}=s_j\circ \alpha_i f_j,
\end{equation}
where $\circ$ denotes concatenation of binary strings. Note that $|\mathbb{F}_{2^t}|=O(r^2n)$, or $O(r^3n)$ if we take the original transformation that we have applied to $r$ and $n$ into account.

\section{The proof of the SD construction}

The following theorem implies that matrix $H$ constructed in the previous section is a parity check matrix of an SD code.

\begin{theorem}\label{Th:SD_h3}
Let $j_1\in [r+1]$ be arbitrary. Consider a collection $T$ of $g+3$ columns of $H$ that is obtained by including all $g$ columns labeled by $x_{i,j_1},$ $i\in [g]$ as well as three additional columns. We claim that the matrix $T$ has full rank.
\end{theorem}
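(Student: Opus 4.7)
The plan is to reduce the full-rank assertion to a Moore-type determinant in three variables and then read off its non-vanishing from the concatenation structure of the $x_{i,j}$'s.

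First, I will bring $T$ into block form. Order its columns so that the first $g$ are the canonical columns $c_i$ corresponding to $x_{i,j_1}$ for $i\in[g]$, and the remaining three are extras $d_1, d_2, d_3$, where $d_k$ lies in wide group $i_k$ at some position $j_k'\neq j_1$ (the $i_k$'s and $j_k'$'s may coincide across $k$, as long as the three columns are distinct). On the top $g$ rows the $c_i$'s already form the identity block $I_g$, while $d_k$ carries a single $1$ in row $i_k$. Performing the column operations $d_k \leftarrow d_k + c_{i_k}$ for $k=1,2,3$ (valid in characteristic $2$, and legitimate even when two extras share a wide group) clears the top part of every $d_k$. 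Writing $y_k := x_{i_k,j_k'} + x_{i_k,j_1}$, the new bottom entries of $d_k$ are $y_k,\ y_k^2,\ y_k^4$ by Frobenius, so $\det T$ equals the $3\times 3$ Moore determinant in $y_1,y_2,y_3$.

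Next I invoke the standard fact that this Moore determinant is nonzero iff $y_1, y_2, y_3$ are $\F_2$-linearly independent: any $\F_2$-relation among the $y_k$'s squares to relations among the $y_k^2$'s and $y_k^4$'s, forcing a column dependence, and the converse follows from the familiar explicit factorization of the Moore determinant. It therefore suffices to show that no non-empty $I \subseteq \{1,2,3\}$ satisfies $\sum_{k\in I} y_k = 0$.

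The construction gives $y_k = (s_{j_k'} + s_{j_1}) \circ \alpha_{i_k}(f_{j_k'} + f_{j_1})$, and a subset sum vanishes iff both halves vanish separately. I handle the three sizes of $I$ in turn, leaning on $5$-independence of $S$. For $|I|=1$ the top is $s_{j_k'}+s_{j_1}\neq 0$ because $j_k'\neq j_1$. For $|I|=3$ the top is $s_{j_1'}+s_{j_2'}+s_{j_3'}+s_{j_1}$; collapsing any pair of equal $j_k'$'s in characteristic $2$ leaves a subset-sum of at most $4$ distinct elements of $S$ that still contains $s_{j_1}$, hence nonzero by $5$-independence. For $|I|=2$ the top $s_{j_k'}+s_{j_l'}$ vanishes precisely when $j_k'=j_l'$; then $d_k\neq d_l$ forces $i_k\neq i_l$, and the bottom becomes $(\alpha_{i_k}+\alpha_{i_l})(f_{j_k'}+f_{j_1})$, which is nonzero because the $\alpha_i$'s are pairwise distinct (from the disjoint-coset condition) and $f_{j_k'}\neq f_{j_1}$. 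The main delicacy is exactly this $|I|=2$ collision case, where the top-half constraint disappears and one must fall back on the $\alpha$-side to rescue independence; the other cases are routine given the slack supplied by $5$-independence.
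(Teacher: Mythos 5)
Your proof is correct and follows essentially the same route as the paper: add the $x_{i,j_1}$ column of each group to the extra columns to reduce $\det T$ to a $3\times 3$ Moore determinant in the differences $y_k$, invoke the $\F_2$-independence criterion, and rule out zero subset sums via the $5$-independent prefixes $s_j$ and the suffixes lying in the cosets $\alpha_i\cdot\F_{r+1}$. The only difference is presentational: you treat the paper's three cases (extras in one, two, or three wide columns) uniformly via a single block reduction and a case split on $|I|$, correctly isolating the same delicate subcase (two extras at the same position $j'$ in different groups) that the paper resolves with the disjoint-coset condition.
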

\begin{proof}
First note that we can safely discard all columns in $T$ that uniquely originate from their respective wide columns (local groups), as every such column has support on some coordinate where no other coordinate of $T$ does. This leaves us with three cases:

\textit{\textbf{Additional columns are from one wide column:}} We need to argue that any matrix of the form
$$
\left[
\begin{array}{cccc}
1           & 1           & 1           & 1           \\
x_{i,j_1}   & x_{i,j_2}   & x_{i,j_3}   & x_{i,j_4}   \\
x_{i,j_1}^2 & x_{i,j_2}^2 & x_{i,j_3}^2 & x_{i,j_4}^2 \\
x_{i,j_1}^4 & x_{i,j_2}^4 & x_{i,j_3}^4 & x_{i,j_4}^4 \\
\end{array}
\right]
$$
has full rank, where $i\in [g],$ and distinct $\{j_1,j_2,j_3,j_4\} \subseteq [r+1].$ By adding the first column to the other 3 columns, one can see that it suffices to show
$$
\left[
\begin{array}{ccc}
x_{i,j_1} + x_{i,j_2}     & x_{i,j_1} + x_{i,j_3}     & x_{i,j_1} + x_{i,j_4}     \\
(x_{i,j_1} + x_{i,j_2})^2 & (x_{i,j_1} + x_{i,j_3})^2 & (x_{i,j_1} + x_{i,j_4})^2 \\
(x_{i,j_1} + x_{i,j_2})^4 & (x_{i,j_1} + x_{i,j_3})^4 & (x_{i,j_1} + x_{i,j_4})^4 \\
\end{array}
\right]
$$
is non-degenerate. By the standard properties of finite fields \cite[Lemma~3.51]{LN83} this amounts to showing that no non-empty subset of elements of
\begin{equation}\label{Eqn:S}
\{ x_{i,j_1} + x_{i,j_2}, x_{i,j_1} + x_{i,j_3}, x_{i,j_1} + x_{i,j_4} \}
\end{equation}
sums to zero. To see that note that after trivial cancellations every sum of elements of~(\ref{Eqn:S}) involves between two and four distinct elements $x_{i,j_s}.$ Thus by formula~(\ref{Eqn:Xij}) and $5$-independence property of prefixes ${s_j}$ the sum is non-zero.

\textit{\textbf{Additional columns are from two wide columns:}}
Here we need to argue that any matrix of the form
$$
\left[
\begin{array}{ccccc}
1             & 1             & 1             &               &               \\
              &               &               & 1             & 1             \\
x_{i_1,j_1}   & x_{i_1,j_2}   & x_{i_1,j_3}   & x_{i_2,j_1}   & x_{i_2,j_4}   \\
x_{i_1,j_1}^2 & x_{i_1,j_2}^2 & x_{i_1,j_3}^2 & x_{i_2,j_1}^2 & x_{i_2,j_4}^2 \\
x_{i_1,j_1}^4 & x_{i_1,j_2}^4 & x_{i_1,j_3}^4 & x_{i_2,j_1}^4 & x_{i_2,j_4}^4 \\
\end{array}
\right]
$$
has full rank, where $i_1\ne i_2\in [g],$ $\{j_1,j_2,j_3\} \subseteq [r+1]$ are distinct, and $\{j_1,j_4\} \subseteq [r+1]$ are distinct. By adding the first column to the second and third columns, and adding the fourth column to the fifth column, one can see that it suffices to show
$$
\left[
\begin{array}{ccc}
x_{i_1,j_1} + x_{i_1,j_2}     & x_{i_1,j_1} + x_{i_1,j_3}     & x_{i_2,j_1} + x_{i_2,j_4}     \\
(x_{i_1,j_1} + x_{i_1,j_2})^2 & (x_{i_1,j_1} + x_{i_1,j_3})^2 & (x_{i_2,j_1} + x_{i_2,j_4})^2 \\
(x_{i_1,j_1} + x_{i_1,j_2})^4 & (x_{i_1,j_1} + x_{i_1,j_3})^4 & (x_{i_2,j_1} + x_{i_2,j_4})^4 \\
\end{array}
\right]
$$
is non-degenerate. This amounts to showing that no non-empty subset of elements of
\begin{equation}\label{Eqn:S2}
\{ x_{i_1,j_1} + x_{i_1,j_2}, x_{i_1,j_1} + x_{i_1,j_3}, x_{i_2,j_1} + x_{i_2,j_4} \}
\end{equation}
sums to zero. Restricting our attention to $(2\log_2 (r+1)+1)$-long prefixes~(\ref{Eqn:Xij}) of the elements above yields the collection
$$ \{ s_{j_1} + s_{j_2}, s_{j_1} + s_{j_3}, s_{j_1} + s_{j_4} \}. $$
It is easy to see that the only zero sums of the elements above are the first and the third elements (when $j_2=j_4$), or the second and the third elements (when $j_3=j_4$). Neither case however yields a zero sum of the respective elements of~(\ref{Eqn:S2}), since $(\log n)$-long suffixes of both $x_{i_1,j_1} + x_{i_1,j_2}$ and $x_{i_1,j_1} + x_{i_1,j_3}$ are non-zero elements in $\alpha_{i_1}\cdot \mathbb{F}_{r+1}$ while the $(\log n)$-long suffix of $x_{i_2,j_1} + x_{i_2,j_4}$ is a non-zero element in $\alpha_{i_2}\cdot \mathbb{F}_{r+1}.$

\textit{\textbf{Additional columns are from three wide columns:}}
Here we need to argue that any matrix of the form
$$
\left[
\begin{array}{cccccc}
1             & 1             &               &               &               &               \\
              &               &  1            & 1             &               &               \\
              &               &               &               & 1             & 1             \\
x_{i_1,j_1}   & x_{i_1,j_2}   & x_{i_2,j_1}   & x_{i_2,j_3}   & x_{i_3,j_1}   & x_{i_3,j_4}   \\
x_{i_1,j_1}^2 & x_{i_1,j_2}^2 & x_{i_2,j_1}^2 & x_{i_2,j_3}^2 & x_{i_3,j_1}^2 & x_{i_3,j_4}^2 \\
x_{i_1,j_1}^4 & x_{i_1,j_2}^4 & x_{i_2,j_1}^4 & x_{i_2,j_3}^4 & x_{i_3,j_1}^4 & x_{i_3,j_4}^4 \\
\end{array}
\right]
$$
has full rank, where $\{i_1,i_2,i_3\}\subseteq [g]$ are distinct; $j_1,j_2,j_3,j_4\in[r+1]$, $j_1\neq j_2$, $j_1\neq j_3$ and $j_1\neq j_4$. By adding the first column to the second, the third column to the fourth, and the fifth column to the sixth, one can see that it suffices to show
$$
\left[
\begin{array}{ccc}
x_{i_1,j_1} + x_{i_1,j_2}     & x_{i_2,j_1} + x_{i_2,j_3}     & x_{i_3,j_1} + x_{i_3,j_4}     \\
(x_{i_1,j_1} + x_{i_1,j_2})^2 & (x_{i_2,j_1} + x_{i_2,j_3})^2 & (x_{i_3,j_1} + x_{i_3,j_4})^2 \\
(x_{i_1,j_1} + x_{i_1,j_2})^4 & (x_{i_2,j_1} + x_{i_2,j_3})^4 & (x_{i_3,j_1} + x_{i_3,j_4})^4 \\
\end{array}
\right]
$$
is non-degenerate. This amounts to showing that no non-empty subset of elements of
\begin{equation}\label{Eqn:S3}
\{ x_{i_1,j_1} + x_{i_1,j_2}, x_{i_2,j_1} + x_{i_2,j_3}, x_{i_3,j_1} + x_{i_3,j_4} \}
\end{equation}
sums to zero. Restricting our attention to $(2\log_2 (r+1)+1)$-long prefixes~(\ref{Eqn:Xij}) of the elements above yields the collection
$$ \{ s_{j_1} + s_{j_2}, s_{j_1} + s_{j_3}, s_{j_1} + s_{j_4} \}. $$
It is easy to see that the zero sum has to involve exactly two elements. However, no sum involving two elements of (\ref{Eqn:S3}) can be zero since the $(\log n)$-long suffixes of $x_{i_1,j_1} + x_{i_1,j_2},$ $x_{i_2,j_1} + x_{i_2,j_3},$ and $x_{i_3,j_1} + x_{i_3,j_4}$ are non-zero elements in (respectively) $\alpha_{i_1}\cdot \mathbb{F}_{r+1},$ $\alpha_{i_2}\cdot \mathbb{F}_{r+1},$ and $\alpha_{i_3}\cdot \mathbb{F}_{r+1}.$ \qedhere
\end{proof}

\section{Vandermonde-type construction of MR codes}\label{Sec:VT}
We now present our new general construction of $(k,r,h)$-local MR codes. We note that this new construction does not follow the paradigm of only using exponents $1, 2, 4, \ldots.$
By increasing $n$ by at most a constant multiplicative factor we can have $n$ be a prime power. Let $\F_q$ be the field that we are working on. We pick $q$ to be a power of $n.$ Thus $\F_q$ is an extension of $\F_n.$ Let $t$ be a parameter to be determined later. Our construction uses a field of size $q=n^{h+g-t}.$

Let $\alpha\in\F_q$ be such that every element of $\F_q$ can be uniquely represented as $\lambda_0+\lambda_1\alpha+\cdots\lambda_{h+g-t-1}\alpha^{h+g-t-1},$ where $\lambda_0,\ldots,\lambda_{h+g-t-1}\in\F_n.$ We partition $\F_n$ into disjoint sets $S_1,\ldots,S_g$, each of size $r+1$. Let $x_{i,1},\ldots,x_{i,r+1}$ be elements of $S_i$ ($i\in[g]$). We set $v_{i,j,b}$ ($i\in[g], j\in[r+1],b\in[h]$) in the matrix $H$~(\ref{eq:parity}) as follows:

\begin{equation} \label{eq:v}
v_{i,j,b}=
\begin{cases}
x_{i,j}^b & b\leq t-1, \\
\langle\vec{u}_{b-t+1},\vec{w}_{x_{i,j}}\rangle & b\geq t,
\end{cases}
\end{equation}
where $\vec{w}_x$ denotes the vector $(x^t,x^{t+1},\ldots,x^{h+g-1})^T$ and $\vec{u}_1,\ldots,\vec{u}_{h-t+1}\in\F_q^{h+g-t}$ are linearly independent vectors satisfying
\begin{equation} \label{eq:u}
A\cdot\vec{u}_\ell=0,
\end{equation}
where $A$ denotes the matrix $\{A_{ij}=\alpha^{(j-1)n^{i-1}}\}_{(g-1)\times(h+g-t)}$, and $\ell\in[h-t+1]$. Note that there are $g-1$ rows in $A$ and we can always find $(h+g-t)-(g-1)=h-t+1$ linearly independent vectors $\vec{u}_\ell$ satisfying the requirement~(\ref{eq:u}).

We now prove the following theorems. In Theorem~\ref{thm:general}, we show that our construction gives MR codes with field size $q=n^{\lfloor(1-\frac{1}{g})h\rfloor+g-1}$. Then in Theorem~\ref{thm:improve}, we improve this result by choosing a different value of $t$ when the parameters satisfy certain conditions.
\begin{theorem} \label{thm:general}
Setting $t=\lceil\frac{h}{g}\rceil+1$, under the condition that $n$ is a prime power, the matrix $H$ defined in the above construction is a parity check matrix of an MR code with field size $q=n^{h+g-t}=n^{\lfloor(1-\frac{1}{g})h\rfloor+g-1}$.
\end{theorem}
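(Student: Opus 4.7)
The plan is to verify the criterion of Lemma~\ref{lem:pf}: for any choice of $g+h$ columns (one per wide column plus $h$ additional), the resulting $(g+h) \times (g+h)$ submatrix must be invertible. Wide columns contributing only a single selected column have their local-parity row supported on that column alone, so both column and row can be discarded. Let $I \subseteq [g]$ index the remaining ``heavy'' wide columns, $g' = |I|$, and after designating one selected column in each $i \in I$ as the ``pivot'', let $k_i$ be the number of non-pivot selected columns in wide column $i$, so $\sum_{i \in I} k_i = h$. Column operations subtract the pivot from the other non-pivots within each wide column, zeroing the local-parity entries of the non-pivots; expansion along the resulting standard basis columns reduces the determinant, up to sign, to $\det M$, where $M$ is the $h \times h$ matrix with $M_{b,s} = v_{i_s,j_s,b} - v_{i_s,j_s^*,b}$ and $j_s^*$ denotes the pivot of wide column $i_s$. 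Writing $y_s = x_{i_s,j_s}$, $z_s = x_{i_s,j_s^*}$, the rows for $b \leq t-1$ read $y_s^b - z_s^b$ and for $b \geq t$ read $y_s^t u_{b-t+1}(y_s) - z_s^t u_{b-t+1}(z_s)$, with $u_\ell(X) \in \F_q[X]$ the polynomial of degree $\leq h+g-t-1$ whose coefficient vector is $\vec{u}_\ell \in \ker A$; in particular $u_\ell(\alpha^{n^i}) = 0$ for $i = 0, \ldots, g-2$.

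Next I factor $M = LV'$, where $V' \in \F_n^{(h+g-1)\times h}$ has $V'_{b,s} = y_s^b - z_s^b$ and $L \in \F_q^{h \times (h+g-1)}$ is the block matrix $\begin{pmatrix} I_{t-1} & 0 \\ 0 & U \end{pmatrix}$ with $U$ the $(h-t+1)\times(h+g-t)$ matrix whose rows are $\vec{u}_\ell^T$. Cauchy--Binet together with Pl\"ucker duality (since the $\vec{u}_\ell$'s form a basis of $\ker A$, there is a fixed nonzero $c \in \F_q$ with $\det(U|_{S'}) = \pm c \cdot \det(A|_{[h+g-t]\setminus S'})$ for every $S' \subseteq [h+g-t]$ of size $h-t+1$) yields
\[
\det M \;=\; c \sum_{T' \in \binom{[h+g-t]}{g-1}} \varepsilon_{T'} \cdot \det(A|_{T'}) \cdot \det(V'|_{R(T')}),
\]
where $R(T') = [h+g-1] \setminus (T' + (t-1))$ and $\varepsilon_{T'} \in \{\pm 1\}$. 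Each $\det(V'|_{R(T')})$ lies in $\F_n$, and each $\det(A|_{T'})$ is the Moore determinant of $\{\alpha^{j-1} : j \in T'\}$, hence nonzero because the $\alpha^{j-1}$'s belong to the $\F_n$-basis $\{1, \alpha, \ldots, \alpha^{h+g-t-1}\}$ of $\F_q$.

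The remaining task is to show this $\F_n$-linear combination of Moore determinants is nonzero in $\F_q$. My plan is, for the given distribution $(k_i)_{i \in I}$, to first locate a specific $T'_0$ for which $\det(V'|_{R(T'_0)}) \neq 0$; this is done by aligning $R(T'_0)$ with the partition structure $S_i \cap S_{i'} = \varnothing$ of the $x_{i,j}$'s so that $V'|_{R(T'_0)}$ becomes a nonsingular generalized Vandermonde-type determinant of values spread across wide columns. Then I expand each $\det(A|_{T'})$ in the basis $\{1, \alpha, \ldots, \alpha^{h+g-t-1}\}$ via the Moore--Leibniz formula, reducing high powers of $\alpha$ through the minimal polynomial, and identify a specific basis coordinate to which only the $T'_0$-term contributes. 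The bound $t = \lceil h/g\rceil + 1$ enters essentially here: for any larger $t$, the worst-case distribution with $k_i \approx \lceil h/g\rceil$ across $i \in I$ leaves too few ``pure Vandermonde'' rows (those corresponding to $b \leq t-1$) in $V'|_{R(T'_0)}$ to guarantee nonvanishing. The main obstacle is the coordinated non-cancellation check: each individual Moore minor is nonzero but for $g \geq 3$ there are $\binom{h+g-t}{g-1} > h+g-t$ of them, so $\F_n$-linear dependencies among them exist, and I must verify that the particular coefficients $\det(V'|_{R(T')})$ do not conspire to engage any such dependency. I expect this to require a case analysis on the structure of $(k_i)_{i \in I}$ that refines and generalizes the three-case argument used for the SD construction in the proof of Theorem~\ref{Th:SD_h3}.
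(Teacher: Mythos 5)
Your reduction to the $h\times h$ matrix $M$, the factorization $M=LV'$, and the Cauchy--Binet/Pl\"ucker expansion $\det M=c\sum_{T'}\pm\det(A|_{T'})\det(V'|_{R(T')})$ are all correct as far as they go, but the argument stops exactly where the content of the theorem begins. Proving that this $\F_n$-linear combination of $(g-1)\times(g-1)$ Moore minors of $A$ is nonzero \emph{is} the theorem: as you note yourself, for $g\geq 3$ there are $\binom{h+g-t}{g-1}$ such minors inside a field of $\F_n$-dimension only $h+g-t$, so $\F_n$-linear dependencies among them abound, and you give no argument ruling out cancellation --- only a plan (find a $T'_0$ with $\det(V'|_{R(T'_0)})\neq 0$, expand each Moore minor in the power basis of $\alpha$, and isolate a basis coordinate hit only by the $T'_0$-term). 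That plan is not substantiated and is unlikely to work as stated: the Moore--Leibniz expansion of $\det(A|_{T'})$ produces $\alpha$ raised to exponents of the form $\sum_i (j_i-1)n^{i-1}$, and reducing these into the basis $\{1,\alpha,\ldots,\alpha^{h+g-t-1}\}$ depends on the minimal polynomial of $\alpha$, which the construction does not specify, so there is no handle on which coordinates a given term contributes to. Moreover, even the existence of a nonzero $\det(V'|_{R(T'_0)})$ is asserted rather than proven, and your heuristic for why $t=\lceil\frac{h}{g}\rceil+1$ is the right threshold (``too few pure Vandermonde rows'') is not a proof mechanism. So there is a genuine gap, and it sits at the central step.

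For comparison, the paper closes this step by working in the dual rather than expanding the determinant. Assuming $\det(M)=0$, a left null vector of $M$ (the paper keeps the $(g'+h)\times(g'+h)$ matrix with the local rows instead of eliminating them by column operations) packages into a single polynomial $f(x)=\sum_{b\leq t-1}\lambda_b x^b+\sum_{b\geq t}c_b x^b$ of degree at most $h+g-1$ with $A\cdot(c_t,\ldots,c_{h+g-1})^T=\vec{0}$, and $f$ takes a constant value $\mu_i$ on each surviving group $\{x_{i,j}\}_{j\in[r_i]}$. Lagrange interpolation through these points shows the coefficient list of $f$ has $\F_n$-dimension at most $g-1$; then the Moore-matrix rank criterion applied to $A\cdot\Xi$, where $\Xi$ expresses $c_t,\ldots,c_{h+g-1}$ over an $\F_n$-basis, forces $c_t=\cdots=c_{h+g-1}=0$, hence $1\leq\deg(f)\leq t-1$ (this is Theorem~\ref{thm:step}). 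The choice $t=\lceil\frac{h}{g}\rceil+1$ then enters only through pigeonhole: some surviving group has $r_i\geq t$ selected columns, so the nonconstant $f$ of degree at most $t-1$ would take the same value at $\geq t$ distinct points, a contradiction. To rescue your primal expansion you would in effect have to reprove these facts in determinant language, which is precisely the part your proposal leaves open.
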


\begin{theorem} \label{thm:improve}
Let $p$ be a prime. Suppose $g,r+1$ are powers of $p$ (therefore $n$ is also a power of $p$), $h\not\equiv1\pmod g$ and $\lceil\frac{h}{g}\rceil\not\equiv p-1\pmod p$.

Let $S$ be an additive subgroup of $\F_n,$ $|S|=r+1$, and let $S_1,\ldots,S_g$ in the above construction to be shifts of $S$ ($S_i=S+\delta_i$ for some $\delta_i\in\F_n$) so that $\F_n=S_1\sqcup\cdots\sqcup S_g.$

Let $t=\lceil\frac{h}{g}\rceil+2$. The matrix $H$ above is a parity check matrix of an MR code over a field of size $q=n^{h+g-t}=n^{\lfloor(1-\frac{1}{g})h\rfloor+g-2}$.
\end{theorem}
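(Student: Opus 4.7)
The plan is to mirror the reduction used in the proof of Theorem~\ref{thm:general} and then leverage the additional structure of Theorem~\ref{thm:improve} to push the analysis one step further. By Lemma~\ref{lem:pf}, it suffices to show that every $(g+h)\times(g+h)$ submatrix $M$ of $H$ formed by one column per wide column plus $h$ additional columns is non-singular. Let $j_i^{(0)}$ denote the base column chosen from wide column $i$, and let $m_i$ additional columns be chosen from wide column $i$ (with $\sum_i m_i=h$). Subtracting each base column from its siblings and Laplace-expanding along every indicator row whose wide column contributes no extras reduces $\det M$ (up to sign) to $\det M'$, where $M'$ is $h\times h$, indexed by heavy parity $b$ on rows and by pairs $(i,j)$ with $j\ne j_i^{(0)}$ on columns, and has entries $v_{i,j,b}-v_{i,j_i^{(0)},b}$.

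Next I would unpack the structure of the $\vec{u}_\ell$'s. The constraint $A\vec{u}_\ell=0$ is equivalent to saying that the polynomial $P_\ell(x):=\sum_k(\vec{u}_\ell)_k\,x^{k-1}$ (of degree at most $h+g-t-1$) vanishes on the Frobenius orbit $\{\alpha^{n^{i-1}}\}_{i=1}^{g-1}$. Thus $P_\ell(x)=\prod_{i=1}^{g-1}(x-\alpha^{n^{i-1}})\,R_\ell(x)$ for some polynomial $R_\ell$ of degree at most $h-t$, and the $R_\ell$'s can be taken to form a basis of all such polynomials. Consequently $v_{i,j,b}=x_{i,j}^t\,P_{b-t+1}(x_{i,j})$ for $b\ge t$, so the $b$-th row of $M'$ is controlled by $x^tP_{b-t+1}(x)$ evaluated at the selected $x_{i,j}$'s. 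Using this description together with the additive-coset partition $\F_n=S_1\sqcup\cdots\sqcup S_g$ with $S_i=S+\delta_i$, $\det M'$ becomes an explicit alternating $\F_q$-linear combination of monomials in the differences $y_{i,s}:=x_{i,j_i^{(s)}}-x_{i,j_i^{(0)}}\in S$.

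The main obstacle is showing $\det M'\ne 0$. The strategy I have in mind is to isolate the monomial of highest total degree (with respect to an appropriate term order on the $y_{i,s}$'s) and verify that its coefficient is a non-zero element of $\F_p$. This coefficient should be expressible as a product of binomial coefficients modulo $p$, arising from the characteristic-$p$ expansion of $(y+\delta)^d$-type factors that appear when one substitutes $x_{i,j}=\delta_i+y_{i,s}$ into $x^tP_\ell(x)$ and separates the $\F_n$-part from the $\F_q$-part. The hypotheses that $g$ and $r+1$ are powers of $p$ make the partition $\F_n=\bigsqcup_i S_i$ play cleanly with the Frobenius, while the congruences $h\not\equiv 1\pmod g$ and $\lceil h/g\rceil\not\equiv p-1\pmod p$ are precisely what is needed to guarantee, via Lucas' theorem, that the relevant binomials do not vanish modulo $p$. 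Raising $t$ by one over Theorem~\ref{thm:general} costs one pure Vandermonde row and buys an extra heavy row, so the combinatorial identity to verify is one degree more delicate than before; isolating and evaluating this leading coefficient is where the real work lies.
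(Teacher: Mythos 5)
Your setup is fine as far as it goes (the reduction to an $h\times h$ matrix of differences, and the observation that $A\vec{u}_\ell=0$ means the polynomial $P_\ell$ vanishes at $\alpha,\alpha^n,\ldots,\alpha^{n^{g-2}}$, so $v_{i,j,b}=x_{i,j}^tP_{b-t+1}(x_{i,j})$), but the proposal stops exactly where the real proof begins, and the substitute you sketch for the non-vanishing step is not viable as stated. The determinant you must control is a single concrete element of $\F_q$, obtained by evaluating at specific points $x_{i,j}\in\F_n$ and with entries built from an essentially arbitrary basis $\vec{u}_1,\ldots,\vec{u}_{h-t+1}$ of $\ker A$; there is no generic-polynomial, degree, or valuation structure under which ``isolate the leading monomial in the $y_{i,s}$ and check its coefficient is nonzero in $\F_p$'' certifies that the full alternating sum is nonzero, and the claim that this coefficient ``should be'' a product of binomial coefficients controlled by Lucas' theorem is an unverified hope, not an argument. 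In fact the hypotheses do not enter that way at all: in the paper, $h\not\equiv1\pmod g$ is a pure counting condition (it forces, by averaging, two distinct groups $i_0\neq i_1$ that each contribute exactly $t-1=\lceil h/g\rceil+1$ chosen columns), and $\lceil h/g\rceil\not\equiv p-1\pmod p$ is used only to ensure $t-1\not\equiv0\pmod p$, so that a certain sum $(t-1)(\delta_{i_0}-\delta_{i_1})$ falls outside the additive subgroup $S$.

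The missing idea is the dual (kernel) argument that the paper isolates as Theorem~\ref{thm:step}: if the square matrix $M$ is singular, a left null vector produces a polynomial $f$ of degree at most $h+g-1$ that is constant on the selected points of each remaining group; Lagrange interpolation bounds the $\F_n$-dimension of its coefficient list by $g-1$ (using that all $x_{i,j}\in\F_n$), and the Moore-matrix structure of $A$ (rows are Frobenius powers $\alpha\mapsto\alpha^{n^{i-1}}$, so full rank is equivalent to $\F_n$-linear independence) forces all coefficients of $x^t,\ldots,x^{h+g-1}$ to vanish, i.e.\ $1\le\deg f\le t-1$. With $t=\lceil h/g\rceil+2$ one then argues: if some group has $r_i\ge t$ selected points, $f$ already has too many roots; otherwise two groups have exactly $t-1$ points each, $f-\mu_{i_0}$ and $f-\mu_{i_1}$ factor completely over $S_{i_0}$ and $S_{i_1}$ with the same leading coefficient, and comparing the $x^{t-2}$ coefficients gives $\sum_j x_{i_0,j}=\sum_j x_{i_1,j}$, which is impossible because the difference equals an element of $S$ plus $(t-1)(\delta_{i_0}-\delta_{i_1})\notin S$. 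None of this interpolation-and-root-counting mechanism, nor the coset/coefficient-comparison endgame, appears in your plan, so the proposal has a genuine gap at its central step.
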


Setting $p=2,$ from Theorem~\ref{thm:improve} we have
\begin{corollary}
For $g=2$ and $h\equiv0\pmod4$, there is an explicit construction of MR codes with field size $q=n^{h/2}$ for $n$ that is a power of $2$.
\end{corollary}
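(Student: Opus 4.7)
The plan is to derive this corollary as a direct specialization of Theorem~\ref{thm:improve} with $p=2$ and $g=2$. First I would check that the three hypotheses of Theorem~\ref{thm:improve} are satisfied under the stated conditions. Since $g=2=2^1$ is a power of $p=2$, and since the identity $n=g(r+1)$ combined with ``$n$ is a power of $2$'' and $g=2$ forces $r+1=n/2$ to be a power of $2$ as well, the structural assumption ``$g,r+1$ are powers of $p$'' holds. Next, $h\equiv0\pmod4$ implies in particular that $h$ is even, so $h\not\equiv1\pmod{g}$ with $g=2$. Finally, since $h$ is divisible by $4$, we have $\lceil h/g\rceil=h/2$, which is even, so $\lceil h/g\rceil\not\equiv1=p-1\pmod p$.

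Once the hypotheses are verified, the explicit construction promised by Theorem~\ref{thm:improve} is precisely the one given by equations~(\ref{eq:v}) and~(\ref{eq:u}) with the parameter
\[
t=\lceil h/g\rceil+2=h/2+2,
\]
with the sets $S_1,\ldots,S_g$ chosen as shifts of an additive subgroup of $\F_n$ of size $r+1$ (such a subgroup exists because $\F_n$ is an $\F_2$-vector space and $r+1$ is a power of $2$). The field size is then read off directly:
\[
q=n^{h+g-t}=n^{h+2-(h/2+2)}=n^{h/2},
\]
matching the claim, and also consistent with the formula $n^{\lfloor(1-1/g)h\rfloor+g-2}=n^{h/2}$.

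There is essentially no obstacle here beyond this bookkeeping: the combinatorial and algebraic content is entirely absorbed into Theorem~\ref{thm:improve}, and the role of the corollary is simply to exhibit a concrete parameter regime where its hypotheses hold and the exponent $h+g-t$ evaluates to the advertised $h/2$. The only mildly delicate point worth stating explicitly is the deduction that $r+1$ must be a power of $2$ from the given hypothesis that $n$ is; beyond that, everything reduces to checking two divisibility conditions modulo $2$.
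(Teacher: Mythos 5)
Your proposal is correct and matches the paper's route exactly: the corollary is stated as a direct specialization of Theorem~\ref{thm:improve} with $p=2$, $g=2$, and $h\equiv0\pmod4$, and your verification of the hypotheses (including $r+1=n/2$ being a power of $2$ and the two parity checks) together with the computation $q=n^{h+2-(h/2+2)}=n^{h/2}$ is precisely the intended bookkeeping.
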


We note that this corollary beats Theorem~\ref{thm:easycons} for the case $g=2$ and $h\equiv0\pmod4$.

\section{Proofs for Vandermonde-type construction}

In this Section, we prove Theorems~\ref{thm:general} and~\ref{thm:improve} using Lemma~\ref{lem:pf}. We consider a collection $T$ of $g+h$ columns of $H$ that is obtained by including one column from each wide column and $h$ additional columns. It suffices to prove that $T$ has full rank.

We perform two operations on $T$: 1) Discard every wide column that contains only one column, and the first row at which that column has a $1.$ Let $g'\in[g]\cap[h]$ be the number of remaining wide columns. 2) Add the first $g'-1$ rows to the $g'$-th row. We obtain a matrix of size $(g'+h)\times(g'+h).$

Note that in the construction, the order of local groups and the order of columns in each local groups are arbitrary. Without loss of generality, we assume the matrix obtained via 1) and 2) is:
$$
M=\left[\begin{array}{c|c|c|c}
1~\cdots~1~ & & & \\
& 1~\dots~1~ & & \\
& & \ddots & \\
1~\cdots~1~ & 1~\cdots~1~ & 1~\cdots~1 & 1~\cdots~1~~ \\
\vec{v}_{1,1} \cdots \vec{v}_{1,r_1} & \vec{v}_{2,1} \cdots \vec{v}_{2,r_2} & \cdots~\cdots & \vec{v}_{g',1} \cdots \vec{v}_{g',r_{g'}}
\end{array}\right],
$$
where $\vec{v}_{i,j}$ denotes $(v_{i,j,1},\ldots,v_{i,j,h})^T$, $r_1,\ldots,r_{g'}\in[2,r+1]$ and $r_1+\cdots+r_{g'}=h+g'$. We need to prove $\det(M)\neq0$.

\begin{theorem} \label{thm:step}
For any $t\in[2,h]$, if $\det(M)=0;$ then there has to exist a polynomial $f(x)$ with $1\leq\deg(f)\leq t-1$ and $\mu_1,\ldots,\mu_{g'}\in\F_q$ such that $f(x_{i,j})=\mu_i$ for all $i\in[g'],j\in[r_i].$
\end{theorem}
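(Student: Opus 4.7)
The strategy is to convert $\det(M)=0$ into a row dependence of $M$, read off the resulting polynomial, and then reduce its degree using the Galois structure built into the construction. Since $\det(M)=0$, the $g'+h$ rows of $M$ — namely the indicator rows $e_1,\ldots,e_{g'-1}$ of the first $g'-1$ wide columns, the all-ones row $\mathbf{1}$, and the heavy rows $V_b=(v_{i,j,b})_{(i,j)}$ for $b\in[h]$ — admit a non-trivial relation
\[
\sum_{i=1}^{g'-1}\lambda_i e_i+\lambda_{g'}\mathbf{1}+\sum_{b=1}^{h}\nu_b V_b=\vec 0
\]
for some $\lambda_i,\nu_b\in\F_q$ not all zero. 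Reading this coordinate-wise gives $\sum_b\nu_b v_{i,j,b}=\mu_i$ for every $(i,j)$, with $\mu_i=-\lambda_i-\lambda_{g'}$ for $i<g'$ and $\mu_{g'}=-\lambda_{g'}$.

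Next, because each $\vec u_\ell\in\ker A$, the polynomial $U_\ell(X):=\sum_s(u_\ell)_s X^{s-1}$ of degree $\le h+g-t-1$ vanishes at the $g-1$ Galois conjugates $\alpha,\alpha^n,\ldots,\alpha^{n^{g-2}}$ and hence factors as $U_\ell=P_0\cdot G_\ell$ with $P_0(X):=\prod_{k=0}^{g-2}(X-\alpha^{n^k})$ and $\deg G_\ell\le h-t$. Aggregating across $\ell$ converts the row dependence into the single polynomial identity
\[
f(X)=\sum_{b=1}^{t-1}\nu_b X^b+X^t P_0(X) H(X),\quad \deg H\le h-t,\quad f(x_{i,j})=\mu_i\ \forall(i,j),
\]
so $f$ is a group-constant polynomial on $\{x_{i,j}\}$ — but its degree can be as large as $h+g-1$, well beyond the target $t-1$.

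In the easy sub-case $H\equiv 0$ the polynomial $f=\sum_{b=1}^{t-1}\nu_b X^b$ has degree at most $t-1$ and no constant term; moreover, if $\nu=0$ then $\mu_i\equiv 0$ forces $\lambda\equiv 0$, contradicting non-triviality, so $f$ is non-constant and we can set $\tilde f:=f$. The hard sub-case is $H\not\equiv 0$, where the degree must genuinely be lowered. The plan there is to exploit Galois invariance: because $x_{i,j}\in\F_n$, each Frobenius-twisted polynomial $f^{(n^k)}$ (obtained by raising coefficients to the power $n^k$) remains group-constant on $\{x_{i,j}\}$, with group-values $\mu_i^{n^k}$. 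Since the roots of $P_0$ cover only a proper arc of the Frobenius orbit of $\alpha$, one should be able to select an $\F_n$-linear combination of $f,f^{(n)},\ldots,f^{(n^{h+g-t-1})}$ that cancels every high-degree tail $X^t P_0^{(n^k)}(X)H^{(n^k)}(X)$ simultaneously, while retaining a non-trivial combination of the low-degree coefficients $\nu_b^{n^k}$.

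The main obstacle is carrying out this cancellation explicitly: one must exhibit an $\F_n$-linear combination that annihilates the $\F_n$-span of $\{P_0^{(n^k)}(X)H^{(n^k)}(X)\}_k$ without also zeroing out the low-degree combination of the $\nu_b^{n^k}$'s. Verifying that such a combination exists seems to require using both the structure of $P_0$ as a partial (not complete) product of the Galois conjugates of $\alpha$ — which controls the dimension of the high-degree cancellation subspace — and the non-triviality of the original row dependence — which controls the richness of the low-degree content. This balancing act is where the bulk of the technical work is to be done.
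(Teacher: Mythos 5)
Your setup is sound and coincides with the paper's first step: you extract a left null vector of $M$, read off a polynomial $f$ with $f(x_{i,j})=\mu_i$, and observe that since each $\vec{u}_\ell\in\ker A$ the high-degree tail of $f$ is divisible by $P_0(X)=\prod_{k=0}^{g-2}(X-\alpha^{n^k})$ — this is a correct restatement of the paper's identity $A\cdot(c_t,\ldots,c_{h+g-1})^T=\vec 0$. The problem is that the entire content of the theorem sits in the sub-case you leave open ($H\not\equiv 0$), and the plan you sketch for it has a genuine gap. You can always annihilate the tail by Frobenius twists: taking $L$ to be the linearized polynomial whose kernel is the $\F_n$-span $W$ of $\{c_t,\ldots,c_{h+g-1}\}$, the combination $\tilde f=\sum_k a_k f^{(n^k)}$ has zero tail and satisfies $\tilde f(x_{i,j})=L(\mu_i)$. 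What this route cannot guarantee is $\deg(\tilde f)\geq 1$: if $\lambda_1,\ldots,\lambda_{t-1}$ happen to lie in $W$, then $L$ kills them as well and $\tilde f$ collapses to a constant, and nothing in your argument rules this out — this is exactly the ``balancing act'' you acknowledge but do not carry out. Note also that the shape of $f$ alone cannot close the gap: $X^tP_0(X)$ itself is an admissible combination of the heavy rows with $H\neq 0$, so any complete proof must use the group-constancy of $f$ quantitatively, whereas your plan uses it only to say the twisted polynomials remain group-constant.

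The missing ingredient is the paper's dimension bound (its Claim~2): since $f$ agrees at the $h+g'$ points $x_{i,j}\in\F_n$ with the values $\mu_i$ (and $\mu_{g'}=0$), Lagrange interpolation — corrected, when $g'<g$, by a polynomial of degree at most $g-g'-1$ times $\prod_{i,j}(x-x_{i,j})$ — shows that every coefficient of $f$ is an $\F_n$-linear combination of at most $g-1$ elements of $\F_q$; in particular $d=\dim_{\F_n}\{c_t,\ldots,c_{h+g-1}\}\leq g-1$. With this bound one writes $(c_t,\ldots,c_{h+g-1})^T=\Xi\cdot(\beta_1,\ldots,\beta_d)^T$ with $\Xi$ of rank $d$ over $\F_n$, and the relation $A\cdot\Xi\cdot(\beta_1,\ldots,\beta_d)^T=\vec 0$ contradicts the nonsingularity of the top $d\times d$ Moore submatrix of $A\cdot\Xi$ — here it is essential that $A$ has $g-1\geq d$ rows. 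This forces $c_t=\cdots=c_{h+g-1}=0$, i.e.\ $H\equiv 0$, after which your easy sub-case (non-constancy from the independence of the first $g'$ rows of $M$) finishes the argument. Without a bound of this kind, or some other quantitative use of the interpolation conditions, the step you defer is precisely the unproved heart of the theorem.
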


\begin{proof}
Let $\vec{z}=(\mu_1,\mu_2,\ldots,\mu_{g'-1},\lambda_0,\lambda_1,\ldots,\lambda_h)\in\F_q^{h+g'}$ be a non-zero vector such that $\vec{z}M=\vec{0}^T$, and let $\mu_{g'}=0$. From the construction of $v_{i,j,b}$ (\ref{eq:v}), we can see that the following polynomial $f(x)$ has degree at most $h+g-1$ and satisfies $f(x_{i,j})=\mu_i$ for $i\in[g'],j\in[r_i]$:
$$
f(x)=\sum_{b=0}^{t-1}\lambda_bx^b+\sum_{b=t}^h\lambda_b\langle\vec{u}_{b-t+1},\vec{w}_x\rangle=\sum_{b=0}^{t-1}\lambda_bx^b+\sum_{b=t}^{h+g-1}c_bx^b
$$
where $\vec{w}_x$ denotes the vector $(x^t,\ldots,x^{h+g-1})^T$, and we use $(c_t,\ldots,c_{h+g-1})^T=\lambda_t\vec{u}_1+\cdots+\lambda_h\vec{u}_{h-t+1}$ to denote the coefficients of $x^t,\ldots,x^{h+g-1}$ in $f(x)$. By the constructions of $\vec{u}_1,\ldots,\vec{u}_{h-t+1}$~(\ref{eq:u}), we can see
\begin{equation} \label{eq:coef}
A\cdot(c_t,\ldots,c_{h+g-1})^T=\vec{0}.
\end{equation}

Next, we show that $1\leq\deg(f)\leq t-1.$

\begin{claim}
$f(x)$ is not a constant, i.e., $\deg(f)\geq1$.
\end{claim}

\begin{clmproof}[Proof of Claim 1]
In the construction we have ensured that $\vec{u}_1,\ldots,\vec{u}_{h-t+1}$ are linearly independent. So the coefficients $c_t,\ldots,c_{h+g-1}$ are all zeros if and only if $\lambda_t=\cdots=\lambda_h=0$. If $f(x)$ is a constant, we have $\lambda_1=\cdots=\lambda_{t-1}=0$ and $c_t=\cdots=c_{h+g-1}=0$. Hence $\lambda_1=\cdots=\lambda_h=0$, and by $\vec{z}M=\vec{0}^T$, the first $g'$ rows of $M$ are linearly dependent (with coefficients $\mu_1,\ldots,\mu_{g'-1},\lambda_0$), which is clearly false.
\end{clmproof}

For a list of $\F_q$ elements $(a_1,\ldots,a_m)\in\F_q^m$, we define the {\em $\F_n$ dimension} of the list as the dimension of these $\F_q$ elements when they can be linearly combined with coefficients in $\F_n$. We use $\dim_{\F_n}(a_1,\ldots,a_m)$ to denote this dimension.

\begin{claim}
$\dim_{\F_n}\{\lambda_0,\ldots,\lambda_{t-1},c_t,\ldots,c_{h+g-1}\}\leq g-1$.
\end{claim}

\begin{clmproof}[Proof of Claim 2]
Using Lagrange interpolating polynomials, we can find a polynomial $\psi(x)$ that agrees with $f(x)$ on $h+g'$ different values $x_{i,j}$ ($i\in[g'],j\in[r_i]$):
$$\psi(x)=\sum_{i=1}^{g'}\sum_{j=1}^{r_i}\mu_i\frac{\prod_{(i',j')\neq(i,j)}(x-x_{i',j'})}{\prod_{(i',j')\neq(i,j)}(x_{i,j}-x_{i',j'})}.$$
For the case $g'=1$, the above $\psi(x)\equiv0$ since $\mu_{g'}=0$.

Note that $\deg(f)\leq h+g-1$ and $\deg(\psi)\leq h+g'-1$. If $g'=g$, we have $f(x)\equiv \psi(x).$ Since $\{x_{i,j}\}_{i\in[g],j\in[r+1]}$ are from $\F_n,$ every coefficient of $f(x)$ can be written as an $\F_n$ linear combination of $\mu_1,\ldots,\mu_{g-1}.$ Hence the $\F_n$ dimension of $f(x)$ coefficients is at most $g-1.$

Next we consider the case $g'<g$. Since $f(x)$ agrees with $\psi(x)$ on $x=x_{i,j}$ for all $i\in[g'],j\in[r_i]$, we can see that there exist $\nu_0,\ldots,\nu_{g-g'-1}\in\F_q$ such that
$$
f(x)\equiv\psi(x)+\left(\sum_{i=0}^{g-g'-1}\nu_ix^i\right)\cdot\prod_{i\in[g'],j\in[r_i]}(x-x_{i,j}).
$$
Therefore the coefficients of $f(x)$ are $\F_n$ linear combinations of $\mu_1,\ldots,\mu_{g'-1},\nu_0,\ldots,\nu_{g-g'-1}$. The $\F_n$ dimension of these coefficients is at most $g'-1+g-g'=g-1$.
\end{clmproof}

Let $d\leq g-1$ be the $\F_n$ dimension of $c_t,\ldots,c_{h+g-1}$.

\begin{claim}
$d=0$, i.e., $c_t=\cdots=c_{h+g-1}=0$, $\deg(f)\leq t-1$.
\end{claim}

\begin{clmproof}[Proof of Claim 3]
Assume $d>0$. Let $\{\beta_1,\ldots,\beta_d\}\in\F_q^d$ be a basis of $\{c_t,\ldots,c_{h+g-1}\}$ (in the sense that $F_q$ elements can be linearly combined with coefficients in $\F_n$). Then we have an $(h+g-t)\times d$ matrix $\Xi=\{\xi_{i,j}\}$ over $\F_n$ such that
$$(c_t,\ldots,c_{h+g-1})^T=\Xi\cdot(\beta_1,\ldots,\beta_d)^T,$$
and $\textrm{rank}(\Xi)=d$. By~(\ref{eq:coef}), we have
\begin{equation} \label{eq:coef2}
A\cdot\Xi\cdot(\beta_1,\ldots,\beta_d)^T=0.
\end{equation}
Let $\tau_j=\xi_{1,j}+\xi_{2,j}\alpha+\cdots+\xi_{h+g-t,j}\alpha^{h+g-t-1}$ ($j\in[d]$). Since $\xi_{i,j}\in\F_n$, for all $\ell\in\N$ we have $\xi_{i,j}^{n^\ell}=\xi_{i,j}$ and
$$\tau_j^{n^\ell}=\xi_{1,j}+\xi_{2,j}\alpha^{n^\ell}+\cdots+\xi_{h+g-t,j}\alpha^{(h+g-t-1)n^\ell}.$$
We can see
$$
A\cdot\Xi=\begin{bmatrix}
\tau_1 & \cdots & \tau_d \\
\tau_1^n & \cdots & \tau_d^n \\
& \vdots & \\
\tau_1^{n^{g-2}} & \cdots & \tau_d^{n^{g-2}}
\end{bmatrix}.
$$
The $d\times d$ submatrix (note that $d\leq g-1$) at the top part of this matrix has full rank if and only if $\tau_1,\ldots,\tau_d$ are linearly independent (with coefficients in $\F_n$) \cite[Lemma~3.51]{LN83}. By the choice of $\xi_{i,j}$ we can see that $\tau_1,\ldots,\tau_d$ are linearly independent. Hence the matrix $A\cdot\Xi$ has rank $d,$ contradicting~(\ref{eq:coef2}).
\end{clmproof}

Combining Claims~1 and~3, we have $1\leq\deg(f)\leq t-1$. This concludes the proof of Theorem~\ref{thm:step}.
\end{proof}

Using Theorem~\ref{thm:step}, we are able to prove Theorems~\ref{thm:general} and~\ref{thm:improve}.

\begin{proof}[Proof of Theorem~\ref{thm:general}]
Assume $\det(M)=0$. The average value of $r_1,\ldots,r_{g'}$ is
$$\frac{h+g'}{g'}=\frac{h}{g'}+1\geq\frac{h}{g}+1.$$
Assume $r_i\geq\lceil\frac{h}{g}\rceil+1=t$, where $i\in[g']$. By Theorem~\ref{thm:step}, there exists a polynomial $f(x)$ with $1\leq\deg(f)\leq t-1$ satisfying $f(x)=\mu_i$ for some $\mu_i\in\F_q$ and $r_i>t-1$ different values $x=x_{i,j}$ ($j\in[r_i]$). We arrive at a contradiction.
\end{proof}

\begin{proof}[Proof of Theorem~\ref{thm:improve}]
Assume $\det(M)=0$. If there exists $i\in[g']$ such that $r_i\geq\lceil\frac{h}{g}\rceil+2=t$, we derive a contradiction with Theorem~\ref{thm:step} along the lines of the  proof of Theorem~\ref{thm:general}. Thus we only consider the case $r_i\leq\lceil\frac{h}{g}\rceil+1,$ for all $i\in[g'].$ The average value of $r_1,\ldots,r_{g'}$ is at least $h/g+1$ as in the proof of Theorem~\ref{thm:general}. Therefore there exists some $i_0\in[g']$ with $r_{i_0}=\lceil\frac{h}{g}\rceil+1$. We claim that there has to exist a different $i_1\in[g']$ ($i_1\neq i_0$) with $r_{i_1}=\lceil\frac{h}{g}\rceil+1.$ If there were no such $i_1$ we would have:
$$h+g'=\sum_{i\in[g']}r_i\leq\Big(\lceil\frac{h}{g}\rceil+1\Big)+(g'-1)\lceil\frac{h}{g}\rceil=g'\lceil\frac{h}{g}\rceil+1$$
$$\Rightarrow h+g\leq g\cdot\lceil\frac{h}{g}\rceil+1\Rightarrow g-1\leq g\cdot\left(\lceil\frac{h}{g}\rceil-\frac{h}{g}\right).$$

The latter inequality holds only when $\lceil\frac{h}{g}\rceil-\frac{h}{g}$ achieves its maximum value $\frac{g-1}{g}$. In other words, this happens only when $h\equiv1\pmod g$. Hence under the assumption $h\not\equiv1\pmod g$, there exists $i_0\neq i_1\in[g']$ with $r_{i_0}=r_{i_1}=\lceil\frac{h}{g}\rceil+1=t-1$.

By Theorem~\ref{thm:step}, there exists a polynomial $f(x)$ with $1\leq\deg(f)\leq t-1$ such that $f(x)=\mu_{i_0}$ for some $\mu_{i_0}\in\F_q$ and $r_{i_0}=t-1$ different values $x=x_{i_0,j}$ ($j\in[r_{i_0}]$), and $f(x)=\mu_{i_1}$ for some $\mu_{i_1}\in\F_q$ and $r_{i_1}=t-1$ different values $x=x_{i_1,j}$ ($j\in[r_{i_1}]$). We can see that $f(x)$ can be written in two ways
\begin{align*}
f(x)&\equiv B_0(x-x_{i_0,1})(x-x_{i_0,2})\cdots(x-x_{i_0,t-1})+\mu_{i_0} \\
&\equiv B_1(x-x_{i_1,1})(x-x_{i_1,2})\cdots(x-x_{i_1,t-1})+\mu_{i_1},
\end{align*}
where $B_0,B_1\in\F_q$ are not zero. We consider the $x^{t-1}$ term in the expansions of these two representations, and conclude that $B_0=B_1.$ Then we consider the $x^{t-2}$ term. We have
\begin{equation} \label{eq:sumg}
x_{i_0,1}+x_{i_0,2}+\cdots+x_{i_0,t-1}=x_{i_1,1}+x_{i_1,2}+\cdots+x_{i_1,t-1}.
\end{equation}
However, the identity above cannot hold. To see this, note that $\lceil\frac{h}{g}\rceil\not\equiv p-1\pmod p,$ or $t-1=\lceil\frac{h}{g}\rceil+1\not\equiv0\pmod p$. For every $j\in[t-1]$, $x_{i_0,j}-x_{i_1,j}$ can be written as $y_j+\delta_{i_0}-\delta_{i_1}$, where $y_j\in S$, and $\delta_{i_0}-\delta_{i_1}\neq 0$. Since $t-1$ is not a multiple of $p$, we can see that $(t-1)\cdot(\delta_{i_0}-\delta_{i_1})$ (summing $\delta_{i_0}-\delta_{i_1}$ for $t-1$ times) is non-zero and
$$\sum_{j=1}^{t-1}(x_{i_0,j}-x_{i_1,j})=\left(\sum_{j=1}^{t-1}y_j\right)+(t-1)\cdot(\delta_{i_0}-\delta_{i_1})\notin S.$$
Since $0\in S$,~(\ref{eq:sumg}) does not hold. This concludes the proof of Theorem~\ref{thm:improve}.
\end{proof}

\bibliographystyle{alpha}
\bibliography{MRcons}

\end{document}